\newtheorem{theorem}{Theorem}
\newtheorem{definition}{Definition}
\newtheorem{example}{Example}
\newtheorem{fact}{Remark}
\newtheorem{lemma}{Lemma}
\newcommand{\guard}{\mathit{\beta}}
\newcommand{\opt}[1]{#1?}
\newcommand{\transitionG}[5]{#2\xrightarrow[\ #3\ \rightarrow \ #5\ ]{#1} #4}
\newcommand{\initC}{\mathit{init}}
\newcommand{\emitC}{\mathit{emit}}
\newcommand{\jumpC}{\mathit{jump}}
\newcommand{\statusC}{\mathit{status}} 
\newcommand{\computeC}{\mathit{compute}}
\newcommand{\flattenC}{\mathit{flatten}} 
\newcommand{\extraC}{\mathit{extra}} 
\newcommand{\ObsC}{\mathit{Obs}}
\newcommand{\receiveC}{\mathit{receive}}
\newcommand{\Obs}{\mathit{Obs}}
\newcommand{\msg}{\mathit{msg}}
\newcommand{\status}{\mathit{status}}
\newcommand{\messageE}[3]{\langle#1,\, #2,\, #3\rangle}
\newcommand{\stateCC}[3]{\langle #1, #2, #3\rangle}
\newcommand\mydots{\hbox to 1em{.\hss.\hss.}}
\newcommand{\concat}{\, +\hspace{-.1cm}+\, }
\newcommand{\nomessage}{     
\tikz[baseline]{
\draw[thin,shift={(0 cm,0.06cm)}] circle (1.5pt); 
\draw[thin,shift={(0 cm,0.06cm)}] (-0.08,0.08) -- (0.09,-0.08); 
\draw[thin,shift={(0 cm,0.06cm)}] (-0.08,-0.08) -- (0.09,0.08) }
}
\newcommand{\itspacem}{\vspace{.1cm}}
\title{Asynchronous Muddy Children Puzzle \\ {\Large (work in progress)}}
\author{
\begin{tabular}{cccc}
Dafina Trufaș & Ioan Teodorescu  & Denisa Diaconescu & Traian Șerbănuță  \\
\multicolumn{4}{c}{\footnotesize University of Bucharest} \\ 
\multicolumn{4}{c}{\footnotesize Runtime Verification, Inc.} \\ \\
\multicolumn{4}{c}{Vlad Zamfir} \\
\multicolumn{4}{c}{\footnotesize Independent Researcher} \\ 
\end{tabular}
}
\begin{document}
\maketitle

\begin{abstract}
In this work-in-progress paper we explore using the recently introduced VLSM formalism to define and reason about the dynamics of agent-based systems. To this aim we use VLSMs to formally present several possible approaches to modeling the interactions in the Muddy Children Puzzle as protocols that reach consensus asynchronously.
\end{abstract}

\section{Introduction}

Formally modeling and reasoning about distributed systems with faults is a challenging task~\cite{Fonseca2017}. We recently proposed the theory of \emph{Validating Labeled State transition and Message production systems (VLSMs)} \cite{vlsm-arxiv} as a general approach to modeling and verifying distributed protocols executing in the presence of faults. 

%
%

The theory of VLSMs  has its roots in the work on verification of the CBC Casper protocol~\cite{vlad-2019,vlad-2020} and follows the correct-by-construction methodology for design and development.
Even though the theory of VLSMs was primarily designed for applications in faulty distributed systems, and in blockchains in particular, the framework is general and flexible enough to capture various types of problems in distributed systems. As an illustration, in this paper we show how we can use VLSMs to model and solve (an asynchronous variant of) the epistemic Muddy Children Puzzle~\cite{fagin2004reasoning}.

\section{Classical Muddy Children Puzzle}
Let us begin by recalling the statement of the Muddy Children Puzzle and a classical epistemic logic approach to solving it~\cite{fagin2004reasoning}.

There are $n$ children playing together. It happens during their play that $k$ of the children get mud on their foreheads. Their father comes and says: "At least one of you has mud on your forehead." (if $k > 1$, thus expresses a fact known to each of them before he spoke). The father then asks the following question, repeatedly: "Does any of you know whether you have mud on your own forehead?".

The initial assumptions are expressed in terms of common knowledge. Hence, we shall assume that it is common knowledge that the father is truthful, that all the children can and do hear the father, that all the children can and do see which of the other children besides themselves have muddy foreheads and that all the children are truthful and intelligent enough to make all the necessary deductions during the game.

\paragraph{The solution}
Let's consider first the situation before the father speaks.
We model the problem by a Kripke structure $M=(S, \models, \mathcal{K}_1,\ldots,\mathcal{K}_n)$ over $\Phi$, where:
\begin{itemize}
    \item $\Phi=\{p_i,\ldots,p_n,p\}$ ($p_i$ = "child $i$ has a muddy forehead" and $p$ = "at least one child has a muddy forehead"). Note that $p$ could be obtained as the disjunction of all $p_i$'s; however, for simplicity one can consider it a primitive (albeit non-atomic) predicate.
    \item  $S$ is a set of $2^n$ states (corresponding to all the possible configurations of clean and muddy children, represented as binary tuples)
    \item $(M, (x_1,\ldots,x_n))\models p_i$ iff $x_i=1$\\$(M, (x_1,\ldots,x_n))\models p$ iff $x_i=1$ for some $i$
    \item $(s,t)\in\mathcal{K}_i$ if s and t are identical in all components except eventually the $i^{th}$ one.
\end{itemize}
It's crucial to remark that, in the absence of the father's initial announcement, the fact that "there is at least one muddy child" is not common knowledge and the state of knowledge never changes, no matter how many rounds we take into account. Indeed, after the first question, all the children will certainly answer "No", since they all consider possible the situation in which they themselves do not have mud on their forehead. No information is gained from this round and the situation remains the same after each of the following ones, because each child considers possible a state in which they are clean.

Now let's analyze how the epistemic context changes after the father speaks: as mentioned above, the common knowledge is now larger (even though in the case with $k\geq 2$ muddy children, $p$ was already common knowledge), because of the {\em public} nature of the announcement. Let's consider how the children reason after all of them answered "No" in the first round: it is obvious that all of them eliminate the states containing one muddy child, since the others could not have all answered "No" otherwise. Continuing inductively, we obtain that after $k$ rounds in which all the children answer "No", we can eliminate from the problem graph all the nodes corresponding to states with at most $k$ muddy children. An immediate consequence of this is that after $k-1$ rounds, it becomes common knowledge that there are at least $k$ muddy children. Hence, the muddy children, who each only see $k-1$ muddy children will conclude that they are muddy and answer "Yes".

There are multiple formalizations of this puzzle in the literature~\cite{van2007dynamic,gierasimczuk2011note,Ma2014-MAASA,miedema2023exploiting,gerbrandy1997reasoning}; indeed it seems that each new formalism reasoning about knowledge includes a modeling of this puzzle as a basic example of the expressiveness of the formalism.

\section{VLSMs -- basic notions}

We give a high-level presentation of the theory of VLSMs. More details can be found in \cite{vlsm-arxiv}. 

\begin{definition}[VLSM]
A {Validating Labeled State transition and Message production system} ({\normalfont VLSM}, for short) is a structure of the form
$\mathcal{V} = (L,S,S_0,M, M_0, \tau, \guard),$
where $L$ is a set of \textit{labels}, $(S_0 \subseteq)\ S$ is a non-empty set of \textit{(initial) states}, $(M_0 \subseteq)\ M$ is a set of \textit{(initial) messages}, $\tau : L \times S \times \opt{M} \to S \times \opt{M}$ is a transition function which takes as arguments a label, a state, and possibly a message, and outputs a state and possibly a message, while $\guard$ is a \textit{validity constraint} on the inputs of the transition function.\footnote{\label{note:option-message}For any set $M$ of messages, let $\opt{M} = M \cup \{\nomessage\}$ be the extension of $M$ with $\nomessage$, where $\nomessage$ stands for {\em no-message}.} 
\label{def:vlsm}
\end{definition}

The transition function in a VLSM is total; however, it indirectly becomes a partial function since the validity constraint can filter out some of its inputs. The set of labels in a VLSM can be used to model non-determinism: it is possible to have multiple parallel transitions from a state using the same input message, each with its own label.

\paragraph{Validity.} A transition is called \textit{constrained} if the validity constraint holds on its input. We denote a constrained transition $\tau(l,s,m) = (s',m')$ by
$\transitionG{l}{s}{m}{s'}{m'}.$ 
A \textit{constrained trace} is a sequence of constrained transitions that starts in an initial state. A \textit{valid trace} is inductively defined as a constrained trace in which the input of each transition can be emitted from a valid trace.
A state is \textit{constrained/valid} if there is a constrained/valid trace leading to it. Similarly, a message is \textit{constrained/valid} if it is produced on a constrained/valid trace (we also consider the no-message to be valid). A transition is called \textit{valid} if it is a constrained transition that uses only valid states and messages; thus a valid trace is a sequence of valid transitions starting in an initial state.
%

\paragraph{Equivocation.} In the literature concerning fault-tolerance in distributed systems, equivocation models the fact that certain agents can claim to be in different states to different parties. VLSMs allow modeling such behavior, by specifying that in a valid trace the valid input messages can be produced on different (though still valid) traces.

\paragraph{Composition.} A single VLSM can represent the local point of view of a component in a distributed system. We can obtain the global point of view by composing multiple VLSMs and lifting the local validity constraint of each component. Designers of systems can impose additional restrictions, which are stronger than the ones that can be specified locally on individual components because they can be stated in terms of the global composite state. We capture this phenomenon by the notion of \textit{composition constraint}. 

\begin{definition}[Composition]
Let  $\{\mathcal{V}_i \}_{i=1}^n$ be an indexed set of VLSMs over the same set of messages $M$.
The {constrained composition under a composition constraint $\varphi$}  is the VLSM
$\mathcal{V} =\ \Bigr({\sum_{i=1}^n} \mathcal{V}_i \Bigr) \Bigr|_\varphi = (L,S, S_0, M, M_0,\tau, \guard \wedge \varphi)$
where $L$ is the disjoint union of labels, the (initial) states are the product of (initial) states of the components, the transition function $\tau$  and the constraint predicate $\guard$ are defined component-wise, while the composition constraint $\varphi \subseteq L \times S \times \opt{M}$ is an additional predicate that filters the inputs for the transition function.
\label{def:composition}
\end{definition}
 
When a composition constraint is trivial, i.e., it is the set $L \times S \times \opt{M}$, we refer to the composition as the \textit{free composition} and drop the subscript $\varphi$ in the notation.

\paragraph{No Equivocation constraint.} As mentioned above, VLSMs implicitly allow equivocation. Nevertheless, a truthful behavior of the agents in a composition can still be enforced by means of a {\em no equivocation} constraint, which does not allow receiving in a (composite) state a message from a component if that component could not have emitted the message in a trace leading to the current state.

\section{Asynchronous Muddy Children Puzzle as a VLSM --- take 1}\label{sec:rounds}

We would like to model the Muddy Children Puzzle as a collective effort of a group of agents (the children) to reach a common goal (knowing whether they are muddy or not) by exchanging messages which reveal as little as possible of their personal (initial) knowledge about the problem (a.k.a., which children they see as muddy).  To this aim, there are several characteristics which will be common to both our presented models:

\begin{itemize}
\item Each child needs to know (at all times) which of the other children are muddy (the child's initial knowledge);
\item Every message needs to have a sender;
\item Every message has to report on the epistemic status of its sender at the time the message was sent;
\item Decisions are final: once a child has decided upon their own status (clean/muddy), they will not receive additional messages, as these would not bring new knowledge.
\end{itemize}

In our models we choose to index the children with natural numbers and to represent a child's initial knowledge as a set of indices of the muddy children they see (note that the index of the child cannot appear in this set). We will also use these indices to identify the sender of a message. For the epistemic status, we will use three possible values:
\begin{itemize}
\item $u$ stands for {\em ``child doesn't known their status''};
\item $m$ stands for {\em ``child knows they have mud on their forehead''};
\item $c$ stands for {\em ``child knows they don't have mud on their forehead''}.
\end{itemize}

For our first modeling attempt, we let the children maintain and communicate a number reflecting ``the round number'' (from the original solution) at which they perceive themselves to be.

Let us start formalizing this setting using VLSMs. 
We represent each child as a VLSM of the form $\mathcal{C}_i = (L_i,S_i,S_{0,i},M, M_{0,i}, \tau_i, \guard_i)$. The states of $\mathcal{C}_i$ are either initial states of the form $\langle \Obs \rangle$ where $\Obs$ represents the children seen as muddy by child $i$, or running states of the form $\stateCC{\Obs}{r}{\status}$, where additionally $r$ is the round perceived by child $i$, and $\status$ represents the epistemic status of child $i$.

\begin{itemize}
    \item $L_i = \{\initC, \emitC, \receiveC\}$ reflects the three types of transitions ({\em init\/}ialization, corresponding to the father's announcement and {\em emit\/}ting/{\em receiv\/}ing messages)
    \itspacem  
    \item {$M_i = \{\messageE{j}{r}{\status} \mid j \in \{1,\ldots, n\},\;r\in\mathbb{N},\;\status\in\{u, m, c\}\}$} --- messages also communicate the round number
    \itspacem
    \item $M_{0,i} = \emptyset$
    \itspacem
    \item $S_i = \{\langle \Obs\rangle\ |\ \Obs\subseteq\{1,\mydots, n\}\} \cup \{\langle \Obs,\, s,\, r\rangle\ |\ \Obs\subseteq\{1,\mydots, n\},\ s\in\{u, m, c\},\ r \in \mathbb{N}\}$
    \itspacem
    \item $S_{0,i} = \{\langle \Obs\rangle\ |\ \Obs\subseteq\{1,\mydots, n\}\}$,
\end{itemize}
 
The invariant we would like to maintain is that a child at round $r$ knows (from the messages exchanged with its peers) that there are more than $r$ muddy children.

Assuming that such an invariant holds for all accessible states, then when a child (say $i$) sends a message containing their round number (say $k$) and the fact that they still cannot determine their status, a child receiving such a message (say $j$) can derive from it that $i$ knows that there are more than $k$ muddy children except themselves (since otherwise $i$ would know their status). Moreover, since the receiving child knows whether the sender is muddy or not,  if $j$ sees $i$ as muddy, $j$ can infer that there are actually more than $k+1$ muddy children.
If the current round of $j$ is smaller than the number inferred (either $k$, if $i$ is clean, or $k+1$, if $i$ is muddy), then $j$ can update their current round to that number.

If that happened, say $r$ is the new current round number, if $r$ is less than the number of muddy children that $j$ sees, then the information provided by $r$ is not yet useful enough to draw a conclusion about the status, so the status will stay unknown.
However, if $r$ ever becomes equal to the number of muddy children that $j$ sees, then the child knows that there are more than $r$ muddy children, and since they can only see $r$ muddy children, they will conclude they have to be muddy.

An interesting fact holds for clean children. Note first that they see all the children who are muddy (say $N$), so for them the number of muddy children they see is larger by $1$ than the number of muddy children seen by any muddy child. Hence, they cannot infer their status using the reasoning above, because for that they would have to receive a message from a muddy child, say $i$ with an unknown status at round $N-1$, but that is precisely the number of muddy children that child $i$ sees, so at that round child $i$ would already be able to infer their status.
Let's assume a child $i$ is clean. If there are enough muddy children, $i$ can receive a message from a muddy child with round $N - 2$, and update their round to $N-1$ (using the reasoning above) while maintaining their status as unknown (since the child still sees more muddy children).  Note that $N-1$ is its maximal round according to the invariant proposed above.
But, at the same time, one of the other muddy children, say $j$ can receive the same message and update their round to $N - 1$, which coincides to the number of muddy children they see, so they would change their status to knowing that they are muddy.
Now if $j$ sends a message with their new round and status and $i$ receives it, then $i$ would know they must be clean, but the question is: what would happen with the child's round?
If $i$ keeps the same round (as to not violate the invariant), then they would have two statuses at the same round. To avoid that, we decide to break the invariant in this case and let a clean child advance to round $N$ when they infer they are clean, thus staying in sync with the traditional solution to the puzzle.

Thus we can rephrase the first invariant as:
"a message $\messageE{j}{r}{\status}$ with status different than clean guarantees that $j$ knows that there are more than $r$ muddy children",
and add a second property, saying that a message $\messageE{j}{r}{\status}$ with status different than unknown guarantees that $j$ sees precisely $r$ muddy children.

In the sequel, we propose a transition function (and constraint predicate) to help us realize the proposal above.

\begin{figure}[ht]
    \centering
\begin{align*}
\tau_i(\receiveC, \stateCC{\Obs}{r}{\status}, \messageE{j}{r'}{\status'})
&= \begin{aligned}[t]
     (\stateCC{\Obs}{r}{\status}, \nomessage), & \quad \text{if } \status \in\{m,c\} \\
     (\stateCC{\Obs}{r'}{c}, \nomessage), & \quad \text{if } \status = u \text{ and } \status'=c\\ & \quad \text{and } j \notin \Obs \text{ and } r' = |\Obs| \\
     (\stateCC{\Obs}{r'-1}{m}, \nomessage), & \quad \text{if } \status = u \text{ and } \status'=c\\ & \quad \text{and } j \notin \Obs \text{ and } r' = |\Obs|+1 \\
     (\stateCC{\Obs}{r'}{m}, \nomessage), & \quad \text{if } \status = u \text{ and } \status'=m\\ & \quad \text{and } j \in \Obs \text{ and } r' = |\Obs| \\
     (\stateCC{\Obs}{r'+1}{c}, \nomessage), & \quad \text{if } \status = u \text{ and } \status'=m\\ & \quad \text{and } j \in \Obs \text{ and } r' = |\Obs|-1 \\
     (\stateCC{\Obs}{r}{\status}, \nomessage), & \quad \text{if } \status = u \text{ and } \status' = u\\ & \quad \text{and } j \in \Obs \text{ and } r' < r \\
     (\stateCC{\Obs}{r'+1}{\status}, \nomessage), & \quad \text{if } \status = u \text{ and } \status' = u\\ & \quad \text{and } j \in \Obs \text{ and } r \leq r' < |\Obs| - 1 \\
     (\stateCC{\Obs}{r'+1}{m}, \nomessage), & \quad \text{if } \status = u \text{ and } \status' = u\\ & \quad \text{and } j \in \Obs \text{ and } r' = |\Obs|-1\\
     (\stateCC{\Obs}{r}{\status}, \nomessage), & \quad \text{if } \status = u \text{ and } \status' = u\\ & \quad \text{and } j \notin \Obs \text{ and } r' \leq r \\
     (\stateCC{\Obs}{r'}{\status}, \nomessage), & \quad \text{if } \status = u \text{ and } \status' = u\\ & \quad \text{and } j \notin \Obs \text{ and } r < r' < |\Obs| \\
     (\stateCC{\Obs}{r'}{m}, \nomessage), & \quad \text{if } \status = u \text{ and } \status' = u\\ & \quad \text{and } j \notin \Obs \text{ and } r' = |\Obs| \\
     \stateCC{\Obs}{r}{\status}, \nomessage), & \quad \text{for the cases not treated above}
   \end{aligned}
\end{align*}   
    \caption{The transition function for the $\receiveC$ label}
    \label{fig:enter-label}
\end{figure}

\begin{description}
\item[Init]
From the initial state, each child takes one (silent) transition, analyzing their
current knowledge and initializing the dynamic part of the state accordingly.
\begin{itemize}
    \item The round number is initialized with 0
    \item If the set of muddy children the child sees is empty, then the knowledge flag
      is set to muddy (since at least one must be muddy); otherwise to unknown
\end{itemize}
$\tau_i(\initC, \langle \Obs\rangle, \nomessage) = (\stateCC{\Obs}{0}{u}, \nomessage)$, if $\Obs\neq\emptyset$ \\
$\tau_i(\initC, \langle \Obs\rangle, \nomessage) = (\stateCC{\Obs}{0}{m}, \nomessage)$, if $\Obs=\emptyset$ \\
$\beta_i(\initC, \langle \Obs\rangle, m) = (m = \nomessage)$
\itspacem
\item[Emit]
From any non-initial state, a child can emit a message consisting of their
identifier, current round number and epistemic status,
without changing state.

$\tau_i(\emitC, \stateCC{\Obs}{r}{\status}, \nomessage) = (\stateCC{\Obs}{r}{\status}, \messageE{i}{r}{\status})$ \\
$\beta_i(\emitC, \stateCC{\Obs}{r}{\status}, m) = (m = \nomessage)$

\itspacem
\item[Receive]
To update their state, whenever receiving a message $\langle j, r', \status' \rangle$
in a state $\langle \Obs, r, \status \rangle$, the child does the following:
\begin{itemize}
    \item If their current $\status$ is not $u$, they ignore the message
      (decisions are final).
      
    $\tau_i(\receiveC, \stateCC{\Obs}{r}{\status}, \messageE{j}{r'}{\status'})$ = $(\stateCC{\Obs}{r}{\status}, \nomessage)$, if $\status \in \{m,c\}$
    \item Otherwise:
    \begin{itemize}
        \item If message status ($\status'$) is $c$, and $j$ is not known to be muddy ($j\not\in \Obs$), then from the property above $r'$ must represent the actual number of muddy children. Hence:
        \begin{itemize}
            \item If $r' = |\Obs|$, then child $i$ can update their round to the received message's round and then conclude that they are clean.
            
            $\tau_i(\receiveC, \stateCC{\Obs}{r}{\status}, \messageE{j}{r'}{\status'})$ = $(\stateCC{\Obs}{r'}{c}, \nomessage)$
            \itspacem
            \item If $r' = |\Obs| + 1$, then child $i$ can update their round to the round before received message's round and then conclude that they are muddy.
            
            $\tau_i(\receiveC, \stateCC{\Obs}{r}{\status}, \messageE{j}{r' }{\status'})$ = $(\stateCC{\Obs}{r'-1}{m}, \nomessage)$ \\
        \end{itemize}
        \itspacem
        \item If message status ($\status'$) is $m$ then it must be that $j$ is known as muddy ($j\in \Obs$); then, from the property above, $r'+ 1$ must represent the actual number of muddy children. Hence:
        \begin{itemize}
            \item If $r' = |\Obs|$, then child $i$ can update their round to the received message's round and then conclude that they are muddy.
            
            $\tau_i(\receiveC, \stateCC{\Obs}{r}{\status}, \messageE{j}{r'}{\status'})$ = $(\stateCC{\Obs}{r'}{m}, \nomessage)$
            \itspacem
            \item If $r' = |\Obs| - 1$, then child $i$ can update their round to round after the received message's round and then conclude that they are clean.
            
            $\tau_i(\receiveC, \stateCC{\Obs}{r}{\status}, \messageE{j}{r' }{\status'})$ = $(\stateCC{\Obs}{r'+1}{c}, \nomessage)$ \\
        \end{itemize}
        \itspacem
        \item If message status ($\status'$) is $u$, and $j$ is known as muddy ($j\in \Obs$), then $i$ can infer that $j$ knows that there are more than $r'$ muddy children, and therefore infers that there are more than $r' + 1$ muddy children. Hence:
        \begin{itemize}
            \item If $r' < r$, then child $i$ can ignore the message (it brings nothing new).
            
            $\tau_i(\receiveC, \stateCC{\Obs}{r}{\status}, \messageE{j}{r'}{\status'})$ = $(\stateCC{\Obs}{r}{\status}, \nomessage)$
            \itspacem
            \item If $r \leq r' < |\Obs| - 1$, then child $i$ can update their round to $r' + 1$, but their status will remain unknown (they already know there are at least $|\Obs|$ muddy children).
            
            $\tau_i(\receiveC, \stateCC{\Obs}{r}{\status}, \messageE{j}{r'}{\status'})$ = $(\stateCC{\Obs}{r' + 1}{\status}, \nomessage)$
            \itspacem
            \item If $r' = |\Obs| - 1$, then child $j$ sees at least $|\Obs|$ muddy children. Since the sender is known as muddy, there are at least $|\Obs| + 1$ muddy children. On the other hand, child $i$ knows there are at most $|\Obs| + 1$ muddy children. Combining the two inequalities, we get that there are precisely $|\Obs| + 1$ muddy children, so child $i$ can advance to round $r' + 1$ and knows they are muddy.\\
            $\tau_i(\receiveC, \stateCC{\Obs}{r}{\status}, \messageE{j}{r'}{\status'})$ = $(\stateCC{\Obs}{r'+1}{m}, \nomessage)$ 
        \end{itemize}
        \itspacem
        \item If message status ($\status'$) is unknown, and $j$ is not known as muddy ($j\notin \Obs$), then $i$ can infer that $j$ knows that there are more than $r'$ muddy children, and therefore infers (only) that there are more than $r'$ muddy children. Hence:
        \begin{itemize}
            \item If $r' \leq r$, then child $i$ can ignore the message.
            
            $\tau_i(\receiveC, \stateCC{\Obs}{r}{\status}, \messageE{j}{r'}{\status'})$ = $(\stateCC{\Obs}{r}{\status}, \nomessage)$
            \itspacem
            \item If $r < r' < |\Obs|$, then child $i$ can update its round to $r'$, but its status will remain unknown (it already knows there are at least $|\Obs|$ muddy children).
            
            $\tau_i(\receiveC, \stateCC{\Obs}{r}{\status}, \messageE{j}{r'}{\status'})$ = $(\stateCC{\Obs}{r'}{\status}, \nomessage)$
            \itspacem
            \item If $r' = |\Obs|$, we reason analogous to the similar above case: child $j$ knows that there are at least $|\Obs| + 1$ muddy children and. Adding the fact that child $i$ knows there are at most $|\Obs| + 1$ muddy children, we get that there are precisely $|\Obs| + 1$ muddy children, so child $i$ knows they are muddy.
            
            $\tau_i(\receiveC, \stateCC{\Obs}{r}{\status}, \messageE{j}{r'}{\status'})$ = $(\stateCC{\Obs}{r'}{m}, \nomessage)$ 
        \end{itemize}
    \end{itemize}
\end{itemize}
\end{description}

\paragraph{Composition.}
Note that the notion of VLSM composition in general allows arbitrary initial states for the components. However, in this setting, we need to ensure that the sets of children known by each child to be muddy are {\em consistent}.
Formally, given a child-state $s$, let $\ObsC(s)$ be the observation-set associated to $s$. Then, for any composite state $\sigma = \langle \sigma_1, \ldots, \sigma_n\rangle$, let $\mathbf{consistent}(\sigma)$ be the predicate defined by
\begin{itemize}
    \item $M \neq \emptyset$ (there should be at least one muddy child)
    \item $\ObsC(\sigma_i) = M \setminus \{i\}$, for any $i$ (each child sees all other muddy children)
\end{itemize}
where $M = \bigcup_{i=1}^n{\ObsC(\sigma_i)}$.

Finally, the game flow can be formalized as a constrained VLSM composition.
\begin{center}
$\mathcal{M}uddy\mathcal{P}uzzle = \Bigr(\mathcal{C}_1 + \ldots + \mathcal{C}_n\Bigr) \Bigr|_\varphi$
\end{center} 
where $\varphi$ specifies the following composition constraint:
\begin{description}
\item[init] At the first transition from an initial state we check that the observation sets corresponding to each component are consistent

$\varphi((i, \initC), (\langle \Obs_1\rangle, \ldots, \langle \Obs_n\rangle), m) =  \mathbf{consistent}(\langle \Obs_1\rangle, \ldots, \langle \Obs_n\rangle)$

\item[receive] We must enforce a no-equivocation constraint to ensure the truthfulness of the participants

$\varphi(\langle i, \receiveC\rangle, \langle\sigma_1,\ldots,\sigma_n\rangle, \messageE{j}{r'}{\status'}) = ({\it \status}'=\status_j \wedge r' = r_j) \vee (\status' = u \wedge r' < r_j)$,
where $\sigma_j=\langle \Obs, r_j, \status_j\rangle$.
\end{description}

\subsection{Correctness of the protocol}
In the following, we give a justification of the fact that the above described protocol is correct in the sense that it converges to a solution.
The valid states of the protocol ($S_V$) correspond to the composite states which are VLSM-valid in the constrained composition described above.
We define the set of non-initial valid states $S_V^\ast=S_V\setminus S_0$ and the set of final states $S_F=\{\langle\sigma_i\ldots\sigma_n\rangle\in S_V\mid  \forall i \in \{1,\ldots, n\}, \statusC(\sigma_i)\neq u \}$.

It can be easily checked that the consistency predicate holds for any $\sigma\in S_V^*$:
\begin{fact}
     For each $\sigma\in S_V^*$, $\mathbf{consistent}(\sigma)$ holds.
\end{fact}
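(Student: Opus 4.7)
The argument rests on two observations: (i) the observation set $\ObsC(\sigma_i)$ of each child is fixed by initialization and is never modified by any subsequent transition, and (ii) the composition constraint enforces $\mathbf{consistent}$ exactly at the first transition that leaves the fully-initial composite state.

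The plan is to induct on the length of a constrained trace reaching $\sigma \in S_V^*$. First, I would verify (i) by a case analysis on the transition function: in the $\initC$ rules, $\langle \Obs\rangle$ is promoted to $\stateCC{\Obs}{0}{\cdot}$; in $\emitC$ the state is literally unchanged; and every branch of $\receiveC$, including the default ``for the cases not treated above'', keeps $\Obs$ in the first position. Hence every composite constrained transition preserves the tuple $(\ObsC(\sigma_1), \ldots, \ObsC(\sigma_n))$, and therefore preserves the truth value of $\mathbf{consistent}$, which depends only on that tuple.

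For the base of the induction, since $\sigma$ is non-initial, any trace reaching it contains at least one $\initC$ transition (only $\initC$ is enabled on a component that is still in $S_{0,i}$, by the local validity constraint $\beta_i$). Pick the first $\initC$ transition along the trace; at that moment every component is still of the form $\langle \Obs_j\rangle$, so the composition constraint $\varphi$ reduces precisely to $\mathbf{consistent}(\langle \Obs_1\rangle, \ldots, \langle \Obs_n\rangle)$, which must therefore hold of the composite state both immediately before and (by (i)) immediately after the transition. Combined with the invariance observed above, this yields $\mathbf{consistent}(\sigma)$.

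I do not expect a genuine obstacle here; the only delicate point is to justify that the very first transition along any trace leading into $S_V^*$ is an $\initC$ emanating from a fully-initial composite state, so that the literal form of the $\varphi$-constraint for $\initC$ applies without any need for reinterpretation on partially-initialized composite states.
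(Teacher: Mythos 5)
Your proof is correct and is essentially the argument the paper has in mind: the paper states this remark without proof ("it can be easily checked"), and the intended justification is exactly your two observations --- the observation sets are preserved by every transition, and the composition constraint forces $\mathbf{consistent}$ at the first $\initC$ step, which necessarily occurs from the fully-initial composite state. The delicate point you flag (that $\emitC$/$\receiveC$ are disabled on component initial states, so the first transition must be an $\initC$) is indeed left implicit in the paper, but your reading matches the intended semantics.
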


Another VLSM-related property, which we state without proof is the fact that the constraint for $\receiveC$ transitions is indeed a no-equivocation constraint:
\begin{fact}[No Equivocation]\label{fct:message}
    For any valid trace $tr$ leading from an initial state $\sigma_0$ to a state $\sigma \in S_v^\ast$ and for any input message $m$ which is valid for $\sigma$ there exists a valid trace starting in $\sigma_0$, of length less than that of $tr$ which emits $m$.
\end{fact}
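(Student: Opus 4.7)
The plan is to argue by induction on the length $n = |tr|$ of the valid trace $tr$ leading from $\sigma_0$ to $\sigma$. For very small $n$ the composition constraint $\varphi$ cannot be satisfied by any non-trivial message --- at least one component's post-init state is needed to give meaning to $r_j$ and $\status_j$ --- so the statement is vacuously true. For the inductive step, we fix a valid trace $tr$ of length $n$ and a message $m = \messageE{j}{r'}{\status'}$ satisfying $\varphi$ at $\sigma$, and split on the two disjuncts of the $\receiveC$ constraint.

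When $r' = r_j$ and $\status' = \status_j$, the plan is to trace back through $tr$ to the last transition $T$ that modified component $j$'s state. At that point $j$ enters $\sigma_j = \langle \Obs_j, r_j, \status_j \rangle$ and remains there for the rest of $tr$; taking the prefix of $tr$ up to $T$ and appending one $\emitC$ on $j$ yields a valid trace that emits $m$ and has length strictly less than $n$, provided $T$ is not the final transition. When $T$ is the final transition (necessarily a $\receiveC$ on $j$ consuming some valid message $m_0$), we apply the inductive hypothesis to the prefix of length $n-1$ with input $m_0$ to obtain an even shorter witness trace, which we then adapt by an emit on $j$ to emit $m$ instead.

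When $\status' = u$ and $r' < r_j$, we need to exhibit a valid trace in which component $j$ passes through state $\langle \Obs_j, r', u \rangle$. The plan here is a secondary induction on the gap $r_j - r'$: the transition that last raised $j$'s round consumed a valid message $\messageE{k}{r''}{\status''}$ whose round is tightly constrained by the case analysis of $\tau_j$; unwinding this and invoking the primary inductive hypothesis on a strictly shorter trace should either yield a prior valid state of $j$ at round $r'$ (concluding as in case~(a)) or reduce to a smaller gap.

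The hardest part will be this second case. The transition function admits round jumps of more than one unit in response to a single receive, so it is not obvious from the current formulation that component $j$ has actually been at round $r'$ on some valid trace. Making the induction go through may require identifying an additional global invariant on valid states --- for example that every round value admissible by $\varphi$ is witnessed by an emit on some shorter valid trace --- or strengthening $\varphi$ so that case~(b) only admits messages whose round $r'$ the sender actually assumed. Settling this invariant is the main subtlety of the proof.
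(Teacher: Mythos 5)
The paper does not actually prove this statement: it is introduced explicitly as a property ``which we state without proof,'' so there is no reference argument to measure yours against. Your outline --- induction on the length of $tr$ with a case split on the two disjuncts of the $\receiveC$ constraint --- is a sensible skeleton, and your diagnosis of case~(b) as the crux is exactly right. Because the transition function lets a child's round advance by more than one in a single $\receiveC$ step, a message $\messageE{j}{r'}{u}$ with $r' < r_j$ may name a round that child $j$ never occupied along $tr$; showing that it is nonetheless emittable from the \emph{same} initial state requires a separate reachability argument (roughly: from any consistent initial state, child $j$ can be driven through every round up to $|\Obs_j|-1$ one step at a time, which in turn needs the intermediate messages of the \emph{other} children to be valid --- a mutual induction that the proposal does not set up). You name this gap but do not close it, so what you have is a plan rather than a proof.

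There is also a quieter problem in case~(a). If the last transition $T$ of $tr$ is the very one that puts component $j$ into $\sigma_j$, then ``prefix up to $T$ plus one $\emitC$'' has length $|tr|+1$, not less than $|tr|$; and the proposed repair --- apply the inductive hypothesis to the input $m_0$ of $T$ and ``adapt by an emit on $j$'' --- does not obviously go through, because the shorter trace emitting $m_0$ need not leave component $j$ in the pre-state from which receiving $m_0$ produces $\sigma_j$. One must interleave a trace bringing $j$ to that pre-state with a trace emitting $m_0$ and verify that the combined length still beats $|tr|$; this bookkeeping (or a weakening of the length bound) is where the real work of case~(a) lies. None of this suggests the statement is false --- it is plausibly true, and the authors evidently judged the verification routine enough to omit --- but both halves of the argument need substantially more than the proposal supplies.
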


Let us now show that the invariant that we stated about the dynamics of the protocol is indeed preserved during a (valid) protocol run.

\begin{lemma}[Invariant Preservation]\label{invariant}
For any $\sigma \in S_V$, if $\sigma \not\in S_0$, then:
\item For any component $i$ of $\sigma$ of the form $\sigma_i=\langle \Obs_i, r_i, \status_i\rangle$:
\begin{itemize}
\item If $\status_i=u$, then $r_i<|\Obs_i|$ (and $|\Obs_i| \le N$)
\item If $\status_i=m$, then $r_i=N-1=|\Obs_i|$
\item If $\status_i=c$, then $r_i=N=|\Obs_i|$
\end{itemize}
\end{lemma}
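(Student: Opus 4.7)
The plan is to induct on the length $n$ of the valid trace leading to $\sigma$. Since a component's running form $\stateCC{\Obs_i}{r_i}{\status_i}$ can only be reached via an $\initC$ transition followed by zero or more $\emitC$/$\receiveC$ transitions, and the lemma concerns only components already in running form, this induction meshes cleanly with the protocol dynamics.

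\emph{Base case.} The first transition out of any initial composite state must be an $\initC$ for some child $k$, since $\emitC$ and $\receiveC$ require running child states. If $\Obs_k\neq\emptyset$ the new component is $\stateCC{\Obs_k}{0}{u}$, and $0<|\Obs_k|$ gives the unknown clause. If $\Obs_k=\emptyset$, the composition constraint $\mathbf{consistent}$ evaluated at the initial composite state forces $M=\{k\}$, hence $N=1$ and $|\Obs_k|=0=N-1$, so the new component $\stateCC{\emptyset}{0}{m}$ satisfies the muddy clause.

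\emph{Inductive step.} Consider the last transition producing $\sigma$. A further $\initC$ transition for a component $k$ is handled as in the base case, using that $\ObsC$ is preserved by every transition (and hence so is consistency). An $\emitC$ transition leaves the local state fixed, and the invariant is inherited verbatim. The substantial case is $\receiveC$: child $i$ receives $\messageE{j}{r'}{\status'}$ in state $\stateCC{\Obs_i}{r_i}{\status_i}$, and only $\sigma_i$ changes. By Remark~\ref{fct:message} (No Equivocation), this message was emitted on a strictly shorter valid trace by a component state of the form $\stateCC{\Obs_j}{r'}{\status'}$, to which the inductive hypothesis applies. Combined with consistency --- which gives $|\Obs_j|=N$ when $j$ is clean and $|\Obs_j|=N-1$ when $j\in\Obs_i$ (i.e., $j\in M$) --- one verifies each clause of Figure~\ref{fig:enter-label} by checking that the updated $\sigma_i'$ falls in the regime prescribed by the lemma.

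\emph{Main obstacle.} The bulk of the work is the exhaustive case analysis for $\receiveC$. Most clauses reduce to immediate arithmetic: for instance, when $\status'=c$, $j\notin\Obs_i$ and $r'=|\Obs_i|$, the IH for $j$ (in its clean clause) gives $r'=N=|\Obs_j|$, forcing $|\Obs_i|=N$, which matches the clean output. The delicate clauses are the three ``promotions'' where $\sigma_i$ jumps directly to status $m$ or $c$ upon receiving an unknown-status message: the IH on $j$ then supplies only the strict bound $r'<|\Obs_j|$, so one must combine it with the consistency identity $|\Obs_j|\in\{N-1,N\}$ and the a priori bound $|\Obs_i|\in\{N-1,N\}$ to pin $N$ down exactly, using that the rule fires only at the boundary values $r'=|\Obs_i|-1$ (with $j\in\Obs_i$) or $r'=|\Obs_i|$ (with $j\notin\Obs_i$). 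This is where the proof is truly doing work; the rest is extensive but mechanical bookkeeping over the twelve clauses of $\tau_i(\receiveC,\cdot,\cdot)$.
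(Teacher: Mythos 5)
Your proof follows essentially the same route as the paper's: induction on the length of the valid trace, with the $\receiveC$ case resolved by invoking the No Equivocation remark to transfer the inductive hypothesis to the sender's state and then combining it with the common-knowledge bound $|\Obs_i|\le N\le|\Obs_i|+1$ to settle the boundary clauses. (One cosmetic slip: there are only two clauses, not three, that promote directly to a decided status upon an unknown-status message, and both promote to $m$; this does not affect the argument.)
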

\begin{proof}
Note that it is common knowledge that $|\Obs_i| \le N \le |\Obs_i| + 1$ for any $i$.

We prove the invariant by induction on the length of a valid trace leading to $\sigma$.
The property trivially holds for $\sigma \in S_0$.
For the induction case, we consider the final (valid) transition leading to $\sigma$, say $\transitionG{(i,l)}{\sigma'}{m}{\sigma}{m'}$, and we assume that the invariant holds for $\sigma'$.
We proceed by case analysis on the label of the transition.
\begin{description}
\item[$l = \initC$:] This transition obviously preserves the invariant, because of the father's statement.
\item[$l = \emitC$:] In case of a $\emitC$ transition, the conclusion is also immediate, since the state remains unchanged.
\item[$l = \receiveC$:] From Remark~\ref{fct:message} there is a composite valid state from which the input message can be emitted which has the same observation sets as $\sigma'$ (since they both are reachable from the same initial state) and for which  we can apply the induction hypothesis and thus assume that the invariant holds.
\begin{itemize}
    \item $\tau_i(\receiveC, \stateCC{\Obs_i}{r}{m}, \messageE{j}{r'}{\status'})$: the message is ignored and the state remains unchanged, so the conclusion is immediate.
    \item $\tau_i(\receiveC, \stateCC{\Obs_i}{r}{c}, \messageE{j}{r'}{\status'})$: we proceed analogous to the previous case.
    \item $\tau_i(\receiveC, \stateCC{\Obs_i}{r}{u}, \messageE{j}{r'}{c})$ and $j \notin \Obs_i$ and $r' = |\Obs_i|$:\\
        By applying the induction hypothesis to the state from which the message is obtained, we have $r'=N=|\Obs_j|$.\\
        The resulting state $\stateCC{\Obs_i}{r'}{c}$ preserves the invariant, because $r'=N=|\Obs_i|$.
    \item $\tau_i(\receiveC, \stateCC{\Obs_i}{r}{u}, \messageE{j}{r'}{c})$ and $j \notin \Obs_i$ and $r' = |\Obs_i| + 1$:\\
        By applying the induction hypothesis to the state from which the message is obtained, we have $r'=N=|\Obs_j|$.\\
        The resulting state $\stateCC{\Obs_i}{r'-1}{m}$ preserves the invariant, because $r'-1=N-1=|\Obs_i|$.
    \item $\tau_i(\receiveC, \stateCC{\Obs_i}{r}{u}, \messageE{j}{r'}{m})$ and $j \in \Obs_i$ and $r' = |\Obs_i|$:\\
        By applying the induction hypothesis to the state from which the message is obtained, we have $r'=N-1=|\Obs_j|$.\\
        The resulting state $\stateCC{\Obs_i}{r'}{m}$ preserves the invariant, because $r'=N-1=|\Obs_i|$.
    \item $\tau_i(\receiveC, \stateCC{\Obs_i}{r}{u}, \messageE{j}{r'}{m})$ and $j \in \Obs_i$ and $r' = |\Obs_i| - 1$:\\
        By applying the induction hypothesis to the state from which the message is obtained, we have $r'=N-1=|\Obs_j|$.\\
        The resulting state $\stateCC{\Obs_i}{r'+1}{c}$ preserves the invariant, because $r'+1=N=|\Obs_i|$.
    \item $\tau_i(\receiveC, \stateCC{\Obs_i}{r}{u}, \messageE{j}{r'}{u})$ and $j \in \Obs_i$ and $r' < r$:\\
        The state after applying the transition remains unchanged, so the conclusion is immediate.
    \item $\tau_i(\receiveC, \stateCC{\Obs_i}{r}{u}, \messageE{j}{r'}{u})$ and $j \in \Obs_i$ and $r \leq r' < |\Obs_i|-1$:\\
        The resulting state $\stateCC{\Obs_i}{r'+1}{u}$ obviously preserves the invariant, since $r'+1<|\Obs_i|$.
    \item $\tau_i(\receiveC, \stateCC{\Obs_i}{r}{u}, \messageE{j}{r'}{u})$ and $j \in \Obs_i$ and $r' = |\Obs_i| - 1$:\\
        We have $r'+1=|\Obs_i|\geq N-1$, so $r'\geq N-2$.\\
        On the other hand, applying the induction hypothesis, we have that $r'<  |\Obs_j|$ and since child $j$ is muddy, $|\Obs_j|=N-1$, and combining these we get $r'<|\Obs_j|=N-1$, which implies $r'\leq N-2$.\\
        We can conclude that $r'=N-2$, so the resulting state $\stateCC{\Obs_i}{r'+1}{m}$ preserves the invariant, because $r'+1=|\Obs_i|=N-1$.
    \item $\tau_i(\receiveC, \stateCC{\Obs_i}{r}{u}, \messageE{j}{r'}{u})$ and $j \notin \Obs_i$ and $r' <= r$:\\
        The state after applying the transition remains unchanged, so the conclusion is immediate.
    \item $\tau_i(\receiveC, \stateCC{\Obs_i}{r}{u}, \messageE{j}{r'}{u})$ and $j \notin \Obs$ and $r < r' < |\Obs_i|$:\\
        The resulting state $\stateCC{\Obs_i}{r'}{\status}$ preserves the invariant, since $r' < |\Obs_i|$.
    \item $\tau_i(\receiveC, \stateCC{\Obs_i}{r}{u}, \messageE{j}{r'}{u})$ and $j \notin \Obs_i$ and $r' = |\Obs_i|$:\\
        By applying the induction hypothesis to the state from which the message is obtained, we have $r'<|\Obs_j|$ and since child $j$ is clean, $|\Obs_j|=N$ so we get $r'<N$ and combining this with the last condition in the transition, we immediately obtain that $|\Obs_i|<N$.\\
        But since it is common knowledge that $|\Obs_i|\geq N-1$ it must be that $|\Obs_i| = N-1$.
        We can conclude that $r' = |\Obs_i| = N-1$, so the resulting state $\stateCC{\Obs_i}{r'}{m}$ preserves the invariant.
    \item For all the cases not treated above, the child's $\beta$ predicate does not hold, so there cannot be any transition.
\end{itemize}
\end{description}
\end{proof}

\begin{theorem}
    From any initial consistent state, there is a path leading to a final state in which each child's status is consistent with the instance of the problem.
\end{theorem}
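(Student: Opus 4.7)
The plan is to split the proof into two parts: (i) correctness of the statuses in any final state reached, and (ii) existence of a valid trace that actually reaches some final state. For (i), I would appeal directly to Lemma~\ref{invariant}: combined with the consistency common-knowledge bound $|\Obs_i| \le N \le |\Obs_i| + 1$ used in its proof, the invariant forces $\status_i = m$ to imply $|\Obs_i| = N - 1$, i.e.\ child $i$ is truly muddy, and $\status_i = c$ to imply $|\Obs_i| = N$, i.e.\ child $i$ is truly clean. So the bulk of the work is (ii): constructing a witness trace.

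For (ii), I would build the trace explicitly by simulating the classical $k$-round argument. First, fire $\initC$ for every child; this is admitted by $\varphi$ precisely because the initial composite state is consistent. If $N = 1$, the unique muddy child already carries status $m$ after init; a single $\emitC$ by that child followed by a $\receiveC$ by each clean child (falling into the branch $\status' = m$, $j \in \Obs_i$, $r' = |\Obs_i| - 1 = 0$) then sends every clean child to $\langle \Obs_i, 1, c\rangle$, which is final.

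If $N \ge 2$, I would iterate phases $k = 0, 1, \ldots, N-2$ maintaining the loop invariant that at the start of phase~$k$ every muddy child sits in $\langle \Obs_i, k, u\rangle$. Within phase~$k$, first each muddy child $j$ fires $\emitC$ (producing $\langle j, k, u\rangle$); then, in any order, each muddy child $i$ fires $\receiveC$ on some $\langle j, k, u\rangle$ with $j \in \Obs_i$ (such a $j$ exists since $|\Obs_i| = N-1 \ge 1$). The applicable branch of $\tau_i$ advances $i$ to $\langle \Obs_i, k+1, u\rangle$ when $k < N-2$ and to $\langle \Obs_i, N-1, m\rangle$ when $k = N-2$, preserving the invariant and eventually settling every muddy child at status $m$. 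A final round in which each clean child $i$ consumes some $\langle j, N-1, m\rangle$ from a muddy $j \in \Obs_i$ then brings $i$ to $\langle \Obs_i, N, c\rangle$ via the $\status' = m$, $r' = |\Obs_i| - 1$ branch, producing the desired final state.

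The main obstacle is checking the no-equivocation composition constraint at every receive step. Its crucial feature is that an older $u$-message $\langle j, r', u\rangle$ remains receivable whenever $r' < r_j$, not only when it exactly matches $j$'s current state. This is precisely what lets later receivers in a phase keep consuming $\langle j, k, u\rangle$ after $j$ itself has advanced past round $k$. For the clean-child step, the delivered $m$-messages faithfully reflect the emitter's current state, so the exact-match clause of $\varphi$ applies directly. Together with the loop invariant, this discharges validity at every step of the constructed trace.
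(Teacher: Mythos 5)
Your proposal is correct, and for the reachability half it takes a genuinely different route from the paper. The paper uses the same decomposition (status-correctness via Lemma~\ref{invariant}, plus existence of a path to a final state), but it establishes reachability abstractly through a \emph{Progress} property --- from any valid non-final composite state some component can strictly increase its round, by advancing leftover initial components, letting an undecided child receive from a decided one, or, if nobody has decided yet, letting the muddy child with minimal round receive the current message of another muddy child --- combined with a \emph{Termination} bound (no child's round can exceed the number of muddy children it sees). You instead exhibit one explicit witness trace that simulates the classical synchronous solution: phases $k=0,\dots,N-2$ in which the muddy children emit and receive in lockstep, followed by the clean children consuming a $\langle j, N-1, m\rangle$ message. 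Both arguments are sound for the theorem as stated, which only asserts existence of a path; the paper's progress-plus-termination argument buys more, namely that \emph{any} valid partial execution can be completed to a correct final state (the protocol cannot get stuck), whereas your construction certifies a single schedule from the initial state. In exchange, your construction is more concrete --- an explicit trace of length $O(nN)$ whose validity can be checked step by step --- and your observation that the clause $\status'=u \wedge r'<r_j$ of the no-equivocation constraint is what keeps stale messages receivable after their sender has advanced is precisely the point the paper's sketch leaves implicit. One small presentational gap: in the $N\ge 2$ case you should state explicitly that each muddy child fires $\emitC$ again after reaching $\stateCC{\Obs_j}{N-1}{m}$, so that the message $\langle j, N-1, m\rangle$ consumed by the clean children has actually been emitted on the trace and is therefore valid; you say this for $N=1$ but omit it in the general case.
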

\begin{proof}[Proof sketch]
The result can be obtained from the following properties:
\begin{description}
\item[Progress] From any valid non-final state $\sigma$, there is a valid transition leading to a state $\sigma'$ on some component $i$, such that $round(\sigma'_i) > round(\sigma_i)$, where we let $round(\langle \Obs \rangle)$ be $-1$.

If there are any component initial states left, we advance one of them. If there are any children which already know their status, we advance to the final state any of the others (there must be at least one child with unknown status since the state is not final). If no child knows their status yet, then there must be at least two muddy children. Then take the one among them with minimum round number (if their round numbers are equal, take any of them) and receive the message corresponding to the current state of the other. This will surely increase their current round number.

\item[Invariant Preservation] The invariant from Lemma~\ref{invariant} holds.

In particular, if a final state is reached, each child's status will be consistent with the current instance of the problem.

\item[Termination] No child can increase their round number past the number of muddy children they see.

Easy to prove by induction and analysis on the cases of the transition function.
\end{description}
\end{proof}

\subsection{Possible optimization}

If we analyze the dynamics of the solution proposed above, we notice that the only relevant exchange of messages happens in the final stages of the protocol.
Indeed, assuming a child sees $n$ muddy children, they know that any other child sees at least $n-1$ muddy children; therefore no message it would send with round less than $n - 1$ can help any other child determine their status. Hence, the child can "jump" during the initialization phase directly to the round $n-1$.  Formally, we add a new label \textbf{jump} and replace the \textbf{Init} transition with the following:

\begin{quote}
From any state with status unknown, a child can take one (silent) transition "jumping" to a future state it knows it can reach based on existing knowledge, where the round number becomes the number preceding the number of muddy children the child sees and the child's knowledge flag remains unknown.

$\tau_i(\jumpC, \stateCC{\Obs}{r}{\status}, m) = (\stateCC{\Obs}{|\Obs|-1}{u}, \nomessage)$\\
$\beta_i(\jumpC, \stateCC{\Obs}{r}{\status}, m) = (\status = u) \wedge (m = \nomessage) \wedge (r < |\Obs| - 1)$
\end{quote}

After such a jump one can find a very short path to a solution:
\begin{enumerate}
    \item a child sends a message after the jump with the unknown status
    \item another (muddy) child receives that message and discovers they are muddy and sends a message with this discovery
    \item all other children (including the first) receive this second message and discover their status
\end{enumerate}

\subsection{Discussion}

The above proposed solution seems to satisfy our initial guidelines for a good asynchronous solution.
Nevertheless, it is far from being perfect, as illustrated by the following example. 

\begin{example}[Information leak]\label{ex:issue}
Assume there are $5$ children, $4$ of whom are muddy and one is not. Then a muddy child (say, child $1$) can discover that they are muddy by only exchanging messages with the clean child (say, child $5$) with the following scenario:

\begin{enumerate}
\item $1$ at round $0$ sends message $m^1_0$ that they do not know
\item $5$ at round $0$ receives message $m^1_0$, advances to round $1$, and sends message $m^5_1$ that they do not know
\item $1$ at round $0$ receives the message $m^5_1$, advances to round $1$, and sends message $m^1_1$ that they do not know
\item $5$ at round $1$ receives message $m^1_1$, advances to round $2$, and sends message $m^5_2$ that they do not know
\item $1$ at round $1$ receives the message $m^5_2$, advances to round $2$, and sends message $m^1_2$ that they do not know
\item $5$ at round $2$ receives message $m^1_2$, advances to round $3$, and sends message $m^5_3$ that they do not know
\item $1$ at round $2$ receives the message $m^5_3$, advances to round $3$ which equals the number of muddy children they see and changes their status to knowing they are muddy (and sends message $m^1_3$ with status muddy)
\end{enumerate}
Moreover, once $1$ knows they are muddy and sends a message about it, upon receiving that message all the other children will know their status.
\end{example}

There are at least two issues about the above example: (1) that a muddy child needs no information from other muddy children to discover that they are muddy; and (2) that once a child knows their status, all other children immediately know their status, even if they have not taken part in the conversation so far.

It thus seems that, although sharing the perceived round as part of the message helps with making the discovery process asynchronous, it also alters it more than expected in terms of what becomes inferable during an exchange of information. The following  section proposes a solution we believe to be closer to the original formulation and intended dynamics, while staying asynchronous.

\section{Solving the puzzle by extracting information from messages}

To alleviate the issues identified with the solution in Section~\ref{sec:rounds}, we propose a new solution, which follows more closely an epistemic logic point of view, in the sense that the messages exchanged between children only carry information that the child has previously seen. This guarantees that deduction can only be done according to information that is (was) publicly available.

To do that, we define the child $i$ as the following VLSM, resembling the previous encoding, but using {\em message histories} (lists of messages previously received) instead of round numbers:

\begin{itemize}
    \item $L_i = \{\initC, \emitC, \receiveC\}$
    \itspacem  
    \item $M = \{\langle j,\, s,\, h\rangle\ |\ j \in \{1,\mydots,n\},\ s \in \{u,m,c\},\ h \in M^*\}$
    \itspacem
    \item $M_{0,i} = \emptyset$
    \itspacem
    \item $S_i = \{\langle \Obs\rangle\ |\ \Obs\subseteq\{1,\mydots, n\}\} \cup \{\langle \Obs,\, s,\, h\rangle\ |\ \Obs\subseteq\{1,\mydots, n\},\ s\in\{u, m, c\},\ h \in M^*\}$
    \itspacem
    \item $S_{0,i} = \{\langle \Obs\rangle\ |\ \Obs\subseteq\{1,\mydots, n\}\}$,
\end{itemize}

Let us now describe the dynamics of a child's interaction with the others.
\begin{description}
\item[Init] The initialization phase will remain basically the same as for the previous solution, except that, instead of round $0$, now the children will have the empty history:
    
$\tau_i(\initC,\langle \Obs \rangle, \nomessage) = 
\left \{
\begin{array}{rl}
(\langle \Obs,\, u,\, [] \rangle, \nomessage), & \mbox{ if } \Obs \neq \emptyset \\
(\langle \Obs,\, m,\, [] \rangle, \nomessage), & \mbox{ if } \Obs = \emptyset \\
\end{array}
\right.
$ \\
$\beta_i(\initC, \langle \Obs\rangle, \msg) = (\msg = \nomessage)$

\item[Emit] Similarly, the emit phase is basically the same, except that, instead of sending their round number, the children will  now send their current history:
 
$\tau_i(\emitC,\langle \Obs,\, s,\, h \rangle, \nomessage) = (\langle \Obs, s, h \rangle,\ \langle i,\, s,\, h \rangle)$ \\
$\beta_i(\emitC,\langle \Obs,\, s,\, h \rangle, \msg) = (\msg = \nomessage)$

\item[Receive] Upon receiving a message (in a not yet known status) the child will aggregate the message received with the 
information they already had (append it to its history) and compute its new status based on its $\Obs$ervation set and its new history.

$\tau_i(\receiveC,\langle \Obs,\, s,\, h \rangle, \langle j,\, s_j,\, h_j \rangle) = $ 

\hspace{1cm}
$
\left \{
\begin{array}{rl}
(\langle \Obs,\, s,\, h \rangle, \nomessage), & \mbox{ if } s \in \{c,m\} \\
(\langle \Obs,\, \computeC(\Obs, h \concat [\langle j, s_j, h_j \rangle]),\, h \concat [\langle j,\, s_j,\, h_j \rangle] \rangle, \nomessage), & \mbox{ if } s = u \\
\end{array}
\right.
$

\textit{// we compute the new status} \\
$\computeC(\Obs, h) = 
\left \{
\begin{array}{rl}
m, & \mbox{ if } \min_{k\in \Obs} |\mathit{groupSimilar}_i(\mathit{unknown}_k(\flattenC(h)))| \geq |\Obs| \\
c, & \mbox{ if the first rule didn't apply and } |\mathit{muddy}(\flattenC(h))| = |\Obs| \\
u, & \mbox{ otherwise}
\end{array}
\right.
$\\

\textit{// we unfold all the messages contained (in depth) in the history} \\
$\flattenC(h \concat [\langle j,\, s_j,\, h_j \rangle]) = 
\flattenC(h)\ \cup\ \flattenC(h_j)\ \cup\ \{\langle j,\, s_j,\, h_j \rangle\}\ \cup\  \extraC(\langle j,\, s_j,\, h_j \rangle)$ \\ 

\textit{// we extract extra information from a message about messages which could have been emitted by its sender (e.g. in all prefixes of the history in which the sender didn't know their status)} \\ 
$\extraC(\langle j,\, s_j,\, h_j \rangle) = \{\langle j,u,h_j'\rangle \ |\ h_j'\  \triangleleft\ h_j\}$\footnote{We denote by $h_1 \triangleleft h_2$ the fact that $h_1$ is a strict prefix of $h_2$.} \\ 

\textit{// we select messages belonging to a given child and having an unknown status} \\ 
$\mathit{unknown}_k(h) = \{m \mid m \in h \wedge m = \langle k,u,h'\rangle \mbox{ for some h'}\}$ \\

\textit{// we group messages that carry the same information relative to child $i$} \\
$\mathit{groupSimilar}_i(h) = h \setminus \{\langle j,\ u ,\ h' \concat [\langle i,\_,\_ \rangle]\rangle \mbox{ for some h'}\}$  \\

\textit{// we construct the set of all children who know they are muddy in a history} \\ 
$\mathit{muddy}(h) = \{j \mid \langle j,m,h_j\rangle \in h\}$
\end{description}

Finally, the game can be formalized as a constrained VLSM composition.
\begin{center}
$\mathcal{M}uddy\mathcal{P}uzzle = \Bigr(\mathcal{C}_1 + \mathcal{C}_2 + \mathcal{C}_3 \Bigr) \Bigr|_\varphi$
\end{center} 
where

\textit{// consistency}  \\
$\varphi((i, \initC), (\langle \Obs_1\rangle, \langle \Obs_2\rangle, \langle \Obs_3\rangle), m) = \mathbf{consistent}(\langle \Obs_1\rangle, \langle \Obs_2\rangle, \langle \Obs_3\rangle)$,\\ \\

\textit{// no equivocation} \\
$\varphi((i, \receiveC), \langle \sigma_1,\sigma_2,\sigma_3\rangle, \langle j, s_j,h_j \rangle) =  $

\hspace{1cm}$(s_j = \statusC(\sigma_j)\ \wedge\ h_j = \mathit{history}(\sigma_j))\ \vee\  (s_j = u\ \wedge\ h_j \triangleleft\ \mathit{history}(\sigma_j)),$

where for a state $\sigma = \langle \Obs, s , h\rangle$, we define $\statusC(\sigma) = s$ and $\mathit{history}(\sigma) =  h$.

It is relatively easy to see that the kind of reasoning described in the (counter-)Example~\ref{ex:issue} is no longer possible, because after the first exchange of messages, child $1$ cannot really infer anything, because all they see is a message from child $5$ containing a message from themselves.

We argue that the information contained in a message represents the epistemic knowledge of its sender about the status of the other children at the time the message was sent, thus making this solution much closer to the standard solution (using synchronous rounds and public announcements).

First, let $N$ be the total number of muddy children, and let us observe that if a child knows $N$, then they also know their status (by comparing $N$ with the number of muddy visible children). On the other hand, telling the others that they know $N$ (without actually telling them the value) is no different than telling them that their status can be inferred.
Let $K_1, \ldots, K_n$ be the epistemic operators assigned to each child, with the intuitive meaning of $K_i \psi$ that $i$ can infer that $\psi$ holds, based on the commonly available information, and on the private information that $i$ has.  Let also $q_1, \ldots, q_n$ be primitive propositions, $q_i$ meaning that $i$ knows the value of $N$.

Then, we can recursively encode a message $m$ as a formula $E(m)$, as follows:

$$
E(\langle j, s_j, h_j \rangle) = \left\{
 \begin{array}{lc}
 (K_j \bigwedge_{m\in h_j} E(m)) \wedge K_j\left(\bigwedge_{m\in h_j} E(m) \rightarrow q_j\right) & \mbox{ if } s_j \neq u \\
   (K_j \bigwedge_{m\in h_j} E(m)) \wedge \neg K_j\left(\bigwedge_{m\in h_j} E(m) \rightarrow q_j\right) & \mbox{ if } s_j = u \\
 \end{array}
\right. 
$$

One important observation is that the formulas corresponding to all additional messages introduced by {\em flatten(h)} are directly deducible from the formulas of $h$.

\paragraph{Discussion.}  Our preliminary exploration of the interactions possible within the model seems to hint that a solution is reachable from any initial state for the case of three children. Nevertheless, our current definition of $\textit{groupSimilar}$ was engineered for the case of three children and we know it doesn't scale to more children as-is.  We believe that for the general case it should be replaced by factoring the messages through an equivalence relation grouping messages which carry the similar amount of information as perceived by the child receiving these messages.

\section{Conclusion and Future Work}

In this paper we have shown that compositions of VLSMs seem like a useful tool to model the dynamics of distributed agents, allowing independent description of their behavior, but also enabling the specification of global constraints (e.g., truthfulness, fairness, etc.).

Regarding the models proposed for the Muddy Children Puzzle, for the first model we were able to show that it achieves the goal of converging towards a solution, but also showed that it might leak more information than intended through communication. We believe the second model to be promising and more faithful to our modeling goals and plan to further explore its possible generalizations to an arbitrary number of children.


\end{document}